\spnewtheorem{result}{Result}{\bf}{\it}
\DeclareMathOperator*{\Ric}{Ric}
\begin{document}

\title{Einstein equations with cosmological constant in Super Space-Time}

\titlerunning{  }        

\author{F. Gholami,  F. Darabi, M.  Mohammadi, S. Varsaie, M. Roshandel}

\authorrunning{F. Gholami \and F. Darabi
        } 
\institute{F. Gholami (Corresponding author)\at
           Institute for Advanced Studies
in Basic Sciences, Zanjan, Iran\\
              \email{gholamifatemeh21@gmail.com}
               \and
                F. Darabi  \at
           Department of Physics, Azarbaijan Shahid Madani University, Tabriz, Iran\\
              \email{f.darabi@azaruniv.ac.ir}
\and
 M.  Mohammadi \at
 Institute for Advanced Studies
in Basic Sciences, Zanjan, Iran\\
\email{moh.mohamady@iasbs.ac.ir} \and
 S. Varsaie \at
 Institute for Advanced Studies
in Basic Sciences, Zanjan, Iran\\
\email{varsaie@iasbs.ac.ir} \and
 M.  Roshandel \at
 Institute for Advanced Studies
in Basic Sciences, Zanjan, Iran\\
\email{m.roshandel@iasbs.ac.ir}}

\date{Received: date / Accepted: date}

\maketitle
\begin{abstract}
We introduce a new kind of super warped product spaces
$\bar{M}_{_{(I)}}=\textbf{I}^{1|0}\times_f M^{m|n}$,
$\bar{M}_{_{(II)}}=\textbf{I}^{0|1}\times_{f} M^{m|n}$, and
$\bar{M}_{_{(III)}}=\textbf{I}^{1|1}\times_{f} M^{m|n}$, where
$M^{m|n}$ is a supermanifold of dimension $m|n$,
$\textbf{I}^{\delta|\delta'}$ is standard superdomain with
$\textbf{I}=(0,1)$ and $\delta,\delta' \in \{0,1\}$, subject to the
warp functions $f(t)$, $f(\bar t)$, and $f(t, \bar t)$,
respectively. In each super warped product space,
$\bar{M}_{_{(I)}}$, $\bar{M}_{_{(II)}}$, and $\bar{M}_{_{(III)}}$,
it is shown that Einstein equations
$\bar{G}_{AB}=-\bar{\Lambda}\bar{g}_{AB}$, with cosmological term
$\bar{\Lambda}$ are reducible to the Einstein equations
$G_{\alpha\beta} = -\Lambda g_{\alpha\beta}$ on the super space
$M^{m|n}$ with cosmological term ${\Lambda}$, where $\bar{\Lambda}$
and ${\Lambda}$ are functions of $f(t)$, $f(\bar t)$, and $f(t, \bar
t)$, as well as ($m$, $n$). This dependence points to the origin of
cosmological terms which turn out to be within the warped structure
of the super space-time.  By using the Generalized Robertson-Walker
space-time, as a super space-time, and demanding for constancy of
$\bar{\Lambda}$ we can determine the warp functions and $\Lambda$
which result in finding the solutions for Einstein equations
$\bar{G}_{AB}=-\bar{\Lambda}\bar{g}_{AB}$ and $G_{\alpha\beta} =
-\Lambda g_{\alpha\beta}$. We have discussed the cosmological solutions,  for each kind of super warped product space, in the special case of $M^{3|0}$.

\end{abstract}

\section{Introduction}
Bishop and O'Neill were the pioneering in introducing the warped
product spaces to construct Riemannian manifolds having negative
curvature \cite{bishop.oneill}. The warped product spaces were
widely studied in the context of general relativity theory to
construct new metrics with interesting geometrical and physical
properties. In this regard, some solutions of Einstein equations,
like generalized Friedmann-Robertson-Walker metric, generalized
Schwarzschild black hole metric, generalized Reissner-Nordstrom
black hole metric, generalized (2 + 1)-dimensional
Banados-Teitelboim-Zanelli (BTZ) black hole metric, generalized
(2+1)-dimensional de Sitter black hole metric, generalized standard
static metric, 5-dimensional Randall-Sundrum and
Dvali-Gabadadze-Porrati metrics, generalized twisted product
structure and special base conformal warped product structure were
shown by the authors to be expressed in terms of multi warped
products in Lorentzian geometry \cite{FBG}-\cite{GDB2}.

By considering the importance of warped products in geometry and
physics, we present a meaningful generalization of warped product
which is called {\it super warped product}. This generalization is
novel in its kind and has not been defined before. We study Einstein
equations with cosmological term in this super warped product space
and show how to reduce them to the Einstein equations with
cosmological term in the lower dimensional space. It is shown that
the cosmological terms are defined in terms of the warp function in
the super warped product space and the dimensions of supermanifold.
In explicit words, we consider three types of super warped product
metric of the types $\bar{M}_{_{(I)}}=\textbf{I}^{1|0}\times_f
M^{m|n}$, $\bar{M}_{_{(II)}}=\textbf{I}^{0|1}\times_{f} M^{m|n}$,
and $\bar{M}_{_{(III)}}=\textbf{I}^{1|1}\times_{f} M^{m|n}$ where
$M^{m|n}$ is a supermanifold of dimension $m|n$,
$\textbf{I}^{\delta|\delta'}$ is standard superdomain with
$\textbf{I}=(0,1)$ and $\delta,\delta' \in \{0,1\}$, subject to the
warp functions $f(t)$, $f(\bar t)$, and $f(t, \bar t)$,
respectively. Similar to the previous examples, we show that in each
super warped product space, $\bar{M}_{_{(I)}}$, $\bar{M}_{_{(II)}}$,
and $\bar{M}_{_{(III)}}$, the Einstein equations
$\bar{G}_{AB}=-\bar{\Lambda}\bar{g}_{AB}$, with cosmological term
$\bar{\Lambda}$ are reducible to the Einstein equations
$G_{\alpha\beta} = -\Lambda g_{\alpha\beta}$ on the super space $M$
with cosmological term ${\Lambda}$, where $\bar{\Lambda}$ and
${\Lambda}$ are functions of $f(t)$, $f(\bar t)$, and $f(t, \bar
t)$, as well as ($m$, $n$). Moreover, by using the Generalized
Robertson-Walker space-times and demanding for constancy of
$\bar{\Lambda}$ we can determine the warp functions and $\Lambda$
which result in finding the solutions for Einstein equations
$\bar{G}_{AB}=-\bar{\Lambda}\bar{g}_{AB}$ and $G_{\alpha\beta} =
-\Lambda g_{\alpha\beta}$. For each kind of super warped product space, in
the special case of $M^{3|0}$, we have discussed the cosmological solutions.

\section{Preliminaries}
In this section we need to state some definitions of super manifolds
and  super Riemannian  metric.

 \begin{defi}
By a super ringed space, we mean a pair $(X, \mathcal O_X )$ where
$X$ is a topological space and $\mathcal O_X$ is a sheaf of
supercommutative $\mathbb Z_2$-graded rings on $X$. A morphism
between
$(X, \mathcal O_X)$ and $(Y, \mathcal O_Y)$ is a pair
$\psi:=(\bar{\psi},\psi^*)$ such that $\bar{\psi}:X\rightarrow Y$ is
a continuous map and $\psi^*:\mathcal O_Y\rightarrow
\bar{\psi}_*\mathcal O_X$ is a homomorphism between the sheaves of
supercommutative $\mathbb Z_2$-graded rings.   The super ring space
$ U^{m|n}:=\big(U,C^\infty_{U^m}\otimes \wedge \mathbb R^n\big)$,
$U$ is an open subset of $\mathbb R^m$, is called standard
superdomain where $C^\infty_{U^m}$ is the sheaf of smooth functions
on $U$ and $\wedge \mathbb R^n$ is the exterior algebra of $\mathbb
R^n$.
\end{defi}

\begin{defi}
A supermanifold of dimension $m|n$ is a super ringed space
$(\bar{M},\mathcal O_M)$ that is locally isomorphic to $\mathbb
R^{m|n}$ and $\bar{M}$ is a second countable and Hausdorff
topological space.\\
It can be shown that the stalks of  the structure sheaf of a
supermanifold are local rings.

 A morphism between two supermanifolds $M=(\bar{M},\mathcal O_M)$ and $N=(\bar{N},\mathcal O_N)$ is just a
 morphism say $(\bar{\psi},\psi^*)$
 between two super ringed spaces such that for $y=\bar{\psi}(x)$, $\psi^*:  \mathcal O_{N_y} \rightarrow  \mathcal O_{M_x}$ is  a morphism   of local rings
   . By $J_M(U)$ for open subset $U\subset\bar{M}$,  we mean the set of all nilpotent elements in $\mathcal O_M(U)$. It can be seen that
   the quotient sheaf ${\mathcal O_M}/{J_M}$ is  locally   isomorphic  to the sheaf $C^{\infty}_{\mathbb R^m}$
  . (See \cite{Vara}.)
   Thus $\widetilde{M}:=(\bar{M},{\mathcal O_M}/{J_M})$ is a classical smooth manifold that is called \textit{reduced manifold} associated to $M$.  In addition
   each morphism  of supermanifolds  $\psi:M\rightarrow N$ induces
   a smooth map $\widetilde{\psi}:\widetilde{M}\rightarrow
   \widetilde{N}$.\\
In supergeometry one may define the tangent bundle $\mathcal T_M$
and differential forms $\Omega_M^k$ for each supermanifold $M$. see
\cite{Vara} and \cite{CC-LC-RF} for more details.
\end{defi}
The category of supermanifolds admits products. Let  $M_i (1 < i <
n)$ be spaces in the category. A ringed space  $ M $ together with
("projection") maps  $ P_i : M\rightarrow  M_i$  is called a product
of the  $ M_i $, and is denoted by
$$M=M_1\times...\times M_n,$$ if the following is satisfied: for any
ringed space $N$, the map  $$ f\longmapsto ( P_1\circ f  , ... , P_n
\circ f),$$ from  $Hom(N, M)$ to $\Pi_i Hom(N, M_i)$  is a
bijection. In other words, the morphisms $f$ from $N$ to $ M $ are
identified with n-tuples $ (f_1, ... ,f_n) $ of morphisms such that
$f_i (N \rightarrow  M_i) $for all $i$.\cite{Vara}
\begin{defi}
Let $V=V_0\oplus V_1$ be a super vector space over a field $\mathbb
K.$ By a scalar superproduct on $V,$ we mean a non-degenerate and
supersymmetric even $\mathbb K$-bilinear form $\langle \ldotp ,
\ldotp \rangle:V \times V\rightarrow V$. By the non-degeneracy, we
mean that the mapping $v\mapsto \langle v , \ldotp \rangle$ is an
isomorphism for all $v\in V$ and by  graded symmetric, we mean that
for all homogeneous elements $v,w \in V,$ we have
$$\langle v , w \rangle=(-1)^{|v||w|}\langle w , v \rangle,$$
where $|v|$ is the parity of $v.$
 \end{defi}
 \begin{defi}
 A Riemannian supermetric on a supermanifold $M$ is a graded symmetric even
 non-degenerate $\mathcal O_M$-linear morphism of sheaves
$$g:\mathcal T_M\otimes \mathcal T_M\rightarrow \mathcal O_M,$$
where by the non-degeneracy we mean that the mapping $X\mapsto
g(X,\ldotp)$ is an isomorphism from $\mathcal T_M$ to $\Omega_M^1.$
A supermanifold equipped with a Riemannian supermetric is called a
Riemannian supermanifold.
\end{defi}
\section{ Super warped product}
 In this section, we define super warped product space. To see the
 analogue definition in classical geometry one may refer to
\cite{bishop.oneill}.\\

\begin{defi}
Let $(B, g_B)$ and $(M, g_M)$ be semi-Riemannian supermanifolds of
dimensions $k|l$  and $m|n$ respectively. Let $f\in \mathcal{O}(B)$
be a  superfunction  with $\tilde{f}>0$ where $\tilde{f}$ is a smooth function on $\tilde{B}$ such that, for each $x\in \tilde{B}, f-\tilde{f}(x)$ is not invertible on every neighbourhood of $x$. For more detailed description see [6]. The super warped product is the
product supermanifold $\mathcal M = B \times M$ together with the
supermetric defined by $g = \pi_1^*(g_B) \oplus \pi_1^*(f)
\pi_2^*(g_M)$ where $\pi_1$ and $\pi_2$ are the natural projections
of $B\times M$ to $B$ and $M$ respectively. In addition $f$ is
called super warp function and  the supermanifolds $(M, g_M)$ and
$(B, g_B)$ are called fiber and  base supermanifolds of the super
warped product respectively.
\end{defi}
\section{  Generalization of  Robertson-Walker Space-times in Super-geometry}
Generalized Robertson-Walker space-times were introduced in 1995 by
Al\'ias, Romero and S\'anchez (see Ref.\cite{LRS,RS}). In this
section, we study  a generalization of  Robertson-Walker space-times
in super-geometry and obtain the Einstein equations with
cosmological constant.

 Let $\bar{M}=\textbf{I}^{\delta|\delta'}\times_{f} M^{m|n}$ be   a super warped product where  $M^{m|n}$ is semi-Riemannian supermanifold  and $\textbf{I}^{\delta|\delta'}$ is standard superdomain with $\textbf{I}=(0,1)$ and $\delta,\delta' \in \{0,1\} $. The supermanifold $\bar{M}$ is
 equipped by
 superlorentzian metric
\begin{equation}
\bar{g}=-\pi_1^*(g_{\textbf{I}^{\delta|\delta'}})+{f}\pi_2^*(g_{M^{m|n}}).
\end{equation}
Then we call $(\bar{M},\bar{g})$ a super semi-Riemannian warped
product . We use the Einstein convention, that is, repeated indices
with one upper index and one lower index denote summation over their
range. If not stated otherwise, throughout the paper we use the
following ranges for indices:  $ i, j, k, ...\in \{1,...,m\}$ and
$\alpha, \beta,
...\in \{m+1,...,m+n\}$ for even and odd indices of $M^{m|n}$ respectively.\\

Now, in the following cases, we study Einstein equations with
cosmological constant in the Generalized Robertson-Walker
space-times.
 \\
\subsection*{\textbf{Case I}}
Let $\bar{M}=\textbf{I}^{1|0}\times_{f} M^{m|n}$, $( x^i,x^\alpha)$
be an even - odd  coordinate system on $M^{m|n}$, $t$ be  a
coordinate on $(0,1)$ and $f\equiv f(t)$. By $|t|,|i|$ and
$|\alpha|$, we mean the parities of $t,x^i$ and $x^\alpha$
respectively. So $|t|=0= |i|$ and
$|\alpha|=1$.\\
Then, we have
\begin{align}\label{wmetr}
&\bar{g}_{tt}=\bar{g}(\partial_ t,\partial _t)=-1
,\nonumber\\
&\bar{g}_{\alpha t}= \bar{g}(\partial_ \alpha,\partial
_t)=\bar{g}_{t \alpha }=\bar{g}_{t i }=0 , \nonumber\\
&\bar{g}_{\alpha\beta}=fg_{\alpha\beta}, \nonumber\\
&\bar{g}_{ij}=fg_{ij},\nonumber\\
& \bar{g}_{i\alpha}=fg_{i\alpha},
\end{align}
where  $\partial_t= \frac{\partial}{\partial t}$, $\partial_i= \frac{\partial}{\partial x_i}$ and  $\partial_\alpha= \frac{\partial}{\partial {x_\alpha}} .$\\
For the next proposition we need the following lemma.
\begin{lemma}
Let $\bar{M}=\textbf{I}^{1|0}\times_{f} M^{m|n}$ be a super
semi-Riemannian warped product. The Christoffel symbol of
$(\bar{M},\bar{g})$ admits the following expression
\begin{align}\label{wGamma1}
\bar{\Gamma}_{IJ}^L
=&\frac{1}{2}\Big[\partial_{_I}\bar{g}_{_{Jt}}-\partial_t\bar{g}_{_{IJ}}+(-1)^{^{|I||J|}}
\partial_{_J}\bar{g}_{_{tI}}\Big]\bar{g}^{tL}\nonumber \\
&+
\frac{1}{2}\Big[\partial_{_I}\bar{g}_{_{Js}}-\partial_s\bar{g}_{_{IJ}}+(-1)^{^{|I||J|}}
\partial_{_J}\bar{g}_{_{sL}}\Big]\bar{g}^{sL}\nonumber\\&
 +
\frac{1}{2}\Big[\partial_{_I}\bar{g}_{_{J\alpha}}-(-1)^{^{(|I|+|J|)}}\partial_\alpha\bar{g}_{_{IJ}}+(-1)^{^{|I|(|J|+1)}}
\partial_{_J}\bar{g}_{_{\alpha I}}\Big]\bar{g}^{\alpha
L},
\end{align}
where  $I,J,K,L$ are arbitrary indices on $\bar M$, i.e. they can
stand for even and odd indices $ t,s,\alpha$.
\end{lemma}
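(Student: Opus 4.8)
The plan is to obtain \eqref{wGamma1} directly from the graded Koszul identity for the Levi--Civita connection of $(\bar M,\bar g)$ and then to break the contracted intermediate index into its base part $t$, its even fiber part $s$ and its odd fiber part $\alpha$. As in classical semi-Riemannian geometry, a graded-symmetric, even, non-degenerate metric $\bar g$ admits a unique torsion-free connection $\bar\nabla$ with $\bar\nabla\bar g=0$. Evaluating metric compatibility on the coordinate vector fields $\partial_I,\partial_J,\partial_K$, whose graded Lie brackets vanish, and taking the usual cyclic combination of the three identities, one is led to
\begin{equation*}
2\,\bar g\big(\bar\nabla_{\partial_I}\partial_J,\partial_K\big)=\partial_I\bar g_{JK}+(-1)^{|I|(|J|+|K|)}\partial_J\bar g_{KI}-(-1)^{|K|(|I|+|J|)}\partial_K\bar g_{IJ}.
\end{equation*}

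Next, writing $\bar\nabla_{\partial_I}\partial_J=\bar\Gamma_{IJ}^{L}\partial_L$ and using $\bar g(\partial_L,\partial_K)=\bar g_{LK}$, contraction with the inverse metric $\bar g^{KL}$ gives the closed form
\begin{equation*}
\bar\Gamma_{IJ}^{L}=\tfrac12\Big[\partial_I\bar g_{JK}+(-1)^{|I|(|J|+|K|)}\partial_J\bar g_{KI}-(-1)^{|K|(|I|+|J|)}\partial_K\bar g_{IJ}\Big]\bar g^{KL},
\end{equation*}
where the index $K$ runs over $\{t\}\cup\{1,\dots,m\}\cup\{m+1,\dots,m+n\}$. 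I would then split this sum according to the parity of $K$: for $K=t$ and for an even fiber index $K=s$ one has $|K|=0$, so that $(-1)^{|I|(|J|+|K|)}=(-1)^{|I||J|}$ and $(-1)^{|K|(|I|+|J|)}=1$, which reproduces exactly the first and second lines of \eqref{wGamma1}; for an odd fiber index $K=\alpha$ one has $|\alpha|=1$, so the relevant signs become $(-1)^{|I|(|J|+1)}$ and $-(-1)^{|I|+|J|}$, giving the third line. Adding the three contributions yields \eqref{wGamma1}. The warped form \eqref{wmetr} of $\bar g$ is not needed for this identity; it enters only afterwards, when the right-hand side is evaluated componentwise in the proof of the proposition.

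The one genuinely delicate point is the sign bookkeeping in the graded Koszul computation: one must fix once and for all the order in which the three arguments are permuted and the side on which $\bar g^{KL}$ is contracted, since the supersymmetry $\bar g_{IJ}=(-1)^{|I||J|}\bar g_{JI}$ and the analogous rule for $\bar g^{KL}$ interact with the permutation signs and can move a term from one line of \eqref{wGamma1} to another. Once the convention is chosen to agree with the one already used for $\bar g_{AB}$ and $\bar g^{AB}$ elsewhere in the paper, the three displayed lines follow by direct inspection, with no further computation.
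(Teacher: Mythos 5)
Your proposal is correct and follows essentially the same route as the paper: the paper's proof likewise just invokes the general graded Christoffel formula $\bar\Gamma_{IJ}^{L}=\tfrac12\big[\partial_I\bar g_{JK}-(-1)^{|K|(|I|+|J|)}\partial_K\bar g_{IJ}+(-1)^{|I|(|J|+|K|)}\partial_J\bar g_{KI}\big]\bar g^{KL}$ and splits the contracted index $K$ over $\{t,s,\alpha\}$ according to its parity. Your additional derivation of that formula from the graded Koszul identity only makes explicit what the paper takes for granted.
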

\begin{proof}
Let $N$ be a supermanifold; for Christoffel symbols we have the
following equation
\begin{align}
\bar{\Gamma}_{IJ}^L
=&\frac{1}{2}\Big[\partial_{_I}\bar{g}_{_{JK}}-(-1)^{^{|K|(|I|+|J|)}}\partial_{_K}\bar{g}_{_{IJ}}+(-1)^{^{|I|(|J|+|K|)}}
\partial_{_J}\bar{g}_{_{KI}}\Big]\bar{g}^{KL},
\end{align}
where
$K \in\{t,s,\alpha\}$.
\end{proof}
Simply, one can show that
\begin{align}
&\bar{\Gamma}_{tt}^t = \bar{\Gamma}_{It}^t =\bar{\Gamma}_{tt}^L =0,
~~~~ \bar{\Gamma}_{tI}^L = \bar{\Gamma}_{It}^L
=\frac{f'}{2f}\delta_{IL}, \nonumber\\
&\bar{\Gamma}_{IJ}^t =\frac{1}{2}f'g_{IJ}, ~~~~~\bar{\Gamma}_{IJ}^L
=\Gamma_{IJ}^L,
\end{align}
where $I,J,L\in\{s,\alpha\}$ and $'$ denotes $\partial_t$.
\begin{lemma}
Let $\bar{M}=\textbf{I}^{1|0}\times_f M^{m|n}$ be a super
semi-Riemannian warped product.
\begin{align}\label{wRieman1}
\bar{R}_{IJK}^L=\partial_{_I} \bar{\Gamma}_{JK}^L +&
(-1)^{^{|I|(|J|+|K|)}}\bar{\Gamma}_{JK}^t\bar{\Gamma}_{It}^L +
(-1)^{^{|I|(|J|+|K|)}}\bar{\Gamma}_{JK}^s\bar{\Gamma}_{Is}^L
\nonumber \\ & + (-1)^{^{|I|(|J|+|K| +1)}}\bar{\Gamma}_{JK}^
{\alpha}\bar{\Gamma}_{I\alpha}^L -
(-1)^{^{^{|I||J|}}}\partial_{_J}\bar{\Gamma}_{IK}^L -
(-1)^{^{|J||K|}}\bar{\Gamma}_{IK}^t\bar{\Gamma}_{Jt}^L \nonumber
\\ & -
(-1)^{^{|J||K|}}\bar{\Gamma}_{IK}^s\bar{\Gamma}_{Js}^L-
(-1)^{^{|J|(1+|K|)}}\bar{\Gamma}_{IK}^{\alpha}\bar{\Gamma}_{J\alpha}^L,
\end{align}
where $ \bar{R}$ is the curvature tensor of  $\nabla$. In addition $
[\nabla_{\partial_ I}, \nabla_{\partial_{_J}}]\partial_{_K} =
\bar{R}_{IJK}^L \partial_{_L}$ and $I,J,K,L$ stand for even and odd
indices $ t,i,\alpha$
\end{lemma}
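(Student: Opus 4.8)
The plan is to compute $\bar R_{IJK}^L$ directly from the definition of the curvature of the Levi-Civita connection $\nabla$ of $\bar g$, written in the coordinate frame $\{\partial_t,\partial_i,\partial_\alpha\}$ of $\bar M$. On a supermanifold the curvature operator is
\[
\bar R(\partial_I,\partial_J)\partial_K=\nabla_{\partial_I}\nabla_{\partial_J}\partial_K-(-1)^{|I||J|}\nabla_{\partial_J}\nabla_{\partial_I}\partial_K-\nabla_{[\partial_I,\partial_J]}\partial_K .
\]
The first step is to observe that coordinate vector fields super-commute, $[\partial_I,\partial_J]=0$ (symmetry of graded mixed partials), so the last term vanishes and the left-hand side equals the graded commutator $[\nabla_{\partial_I},\nabla_{\partial_J}]\partial_K$, which by definition is $\bar R_{IJK}^L\,\partial_L$; this already establishes the auxiliary statement in the lemma.

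The second step is to expand each iterated covariant derivative. Using $\nabla_{\partial_J}\partial_K=\bar\Gamma_{JK}^M\partial_M$ and the graded Leibniz rule $\nabla_{\partial_I}(h\,\partial_M)=(\partial_I h)\,\partial_M+(-1)^{|I|\,|h|}h\,\nabla_{\partial_I}\partial_M$, one obtains
\[
\nabla_{\partial_I}\nabla_{\partial_J}\partial_K=\partial_I(\bar\Gamma_{JK}^M)\,\partial_M+(-1)^{|I|(|J|+|K|+|M|)}\bar\Gamma_{JK}^M\,\bar\Gamma_{IM}^L\,\partial_L ,
\]
where the exponent comes from the fact that, $\nabla$ being an even connection, the coefficient $\bar\Gamma_{JK}^M$ is homogeneous of parity $|J|+|K|+|M|$. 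Splitting the summed index $M$ into the base index $t$, the remaining even fibre indices $s$, and the odd fibre indices $\alpha$ (with $|t|=|s|=0$, $|\alpha|=1$) reproduces the term $\partial_I\bar\Gamma_{JK}^L$ together with the three quadratic terms on the first line and the start of the second line of \eqref{wRieman1}.

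The third step treats the subtracted term $-(-1)^{|I||J|}\nabla_{\partial_J}\nabla_{\partial_I}\partial_K$ in the same way, yielding $-(-1)^{|I||J|}\partial_J(\bar\Gamma_{IK}^L)\,\partial_L$ together with $-(-1)^{|I||J|}(-1)^{|J|(|I|+|K|+|M|)}\bar\Gamma_{IK}^M\bar\Gamma_{JM}^L\,\partial_L$. Here one uses the sign simplification $(-1)^{|I||J|}(-1)^{|J|(|I|+|K|+|M|)}=(-1)^{2|I||J|}(-1)^{|J|(|K|+|M|)}=(-1)^{|J|(|K|+|M|)}$ and again splits $M$ into $t$, $s$, $\alpha$; this produces the remaining four terms of \eqref{wRieman1}. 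Collecting the coefficients of $\partial_L$ from both contributions gives the claimed identity.

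The only real difficulty is sign discipline: one must apply the Koszul rule consistently in the Leibniz expansion, remember that $\bar\Gamma_{JK}^M$ is a homogeneous object of parity $|J|+|K|+|M|$, and verify the two sign cancellations above; everything else is a routine $\mathbb{Z}_2$-graded rewriting of the classical Ricci identity. As a consistency check I would specialize to purely even indices, where \eqref{wRieman1} must collapse to the familiar formula $R^l_{ijk}=\partial_i\Gamma^l_{jk}+\Gamma^m_{jk}\Gamma^l_{im}-\partial_j\Gamma^l_{ik}-\Gamma^m_{ik}\Gamma^l_{jm}$.
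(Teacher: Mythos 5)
Your proposal is correct and follows essentially the same route as the paper: the paper's proof simply records $\nabla_{\partial_I}\partial_J=\bar\Gamma_{IJ}^K\partial_K$ and the Koszul equations and then defers the "straightforward computation" to the reference [GO], which is exactly the coordinate expansion of $[\nabla_{\partial_I},\nabla_{\partial_J}]\partial_K$ that you carry out explicitly. Your parity bookkeeping (the coefficient $\bar\Gamma_{JK}^M$ having parity $|I|+|J|+|M|$ replaced by $|J|+|K|+|M|$, the graded Leibniz sign, and the cancellation $(-1)^{2|I||J|}=1$) reproduces every sign in \eqref{wRieman1} after splitting the summed index into $t$, $s$, $\alpha$, so your write-up is in fact more complete than the one in the paper.
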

\begin{proof}
Let $\nabla$ be the Levi-Civita connection of the metric $\bar{g}$.
Then one has
$\nabla_{\partial_{_I}}\partial_{_J}=\Gamma_{IJ}^K\partial_{_K}$.
Since the connection $\nabla$  is torsion free we get
\begin{align*}
&\Gamma_{IJ}^L g_{_{LK}} + \Gamma_{IJ}^{\bar{L}}
g_{_{\bar{L}K}}=\frac{1}{2}\left(
\partial_{_I} g_{_{JK}}- (-1)^0 \partial_{_K}
g_{_{IJ}}+(-1)^{^{|I||J|}}\partial_{_J} g_{_{KI}}\right),\\
&\Gamma_{IJ}^L g_{_{L\bar{K}}} + \Gamma_{IJ}^{\bar{L}}
g_{_{\bar{L}\bar{K}}}= \frac{1}{2}\left(
\partial_{_I} g_{_{J\bar{K}}}- (-1)^{^{(|J|+|K|)}}
\partial_{_{\bar{K}}}
g_{_{IJ}}+(-1)^{^{|I|(|J|+1)}}\partial_{_J} g_{_{\bar{K}I}}\right).
\end{align*}
By the equation \eqref{wGamma1}, and a straightforward computation
according the argument in \cite{GO} we will have the desired result.
\end{proof}
\begin{pro}
Let $\bar{M}=\textbf{I}^{1|0}\times_{f} M^{m|n} $ be a super
semi-Riemannian warped product then one has
\begin{align}\label{wRicc1}
&\bar{\Ric}_{tt}= (n-m)(\frac{f''}{2f}-\frac{f'^2}{4f^2}), \nonumber\\
&\bar{\Ric}_{t\alpha}=\bar{\Ric}_{ti}=0, \nonumber \\
&\bar{\Ric}_{\alpha\beta}={\Ric}_{\alpha\beta} +
\left((m-n-2)\frac{f'^2}{4f^2}+\frac{f''}{2f}\right)g_{\alpha\beta}, \nonumber\\
&\bar{\Ric}_{ij}={\Ric}_{ij} +
\left((m-n-2)\frac{f'^2}{4f^2}+\frac{f''}{2f}\right)g_{ij}, \nonumber\\
&\bar{\Ric}_{\alpha i}={\Ric}_{\alpha i} +
\left((m-n-2)\frac{f'^2}{4f^2}+\frac{f''}{2f}\right)g_{\alpha i},
\end{align}
where $ {\Ric}_{\alpha\beta} , {\Ric}_{\alpha i},{\Ric}_{ij} $ are
the Ricci tensors of semi-Riemannian supermanifold   $(M, g)$.
\end{pro}
\begin{proof}
Let $N$ be an arbitrary supermanifold then we have
\begin{align*}
\bar{\Ric}(\partial_{_I},\partial_{_J})= R_{sIJ}^s
-(-1)^{^{(|I|+|J|)}}R_{\alpha IJ}^\alpha,
\end{align*}
where $s, \alpha$ are even and odd indices over $N.$ If
$N:=\bar{M}=\textbf{I}^{1|0}\times_{f} M^{m|n}$ we have
\begin{align}
\bar{\Ric}(\partial_{_I},\partial_{_J})= R_{tIJ}^t + R_{sIJ}^s
-(-1)^{^{(|I|+|J|)}}R_{\alpha IJ}^\alpha,
\end{align}
where $\bar{R} _{IJK}^L$ is curvature tensor as above and $I,J
\in\{t,s,\alpha\}.$

\end{proof}
\begin{pro}
Let $\bar{M}=\textbf{I}^{1|0}\times_{f} M^{m|n} $  be a super
semi-Riemannian warped product. Then the scalar curvature $\bar{S}$
of $(\bar{M},\bar{g})$ admits the following expressions
\begin{equation}\label{wScalar1}
\bar{S}=\frac{S_M}{f}+(m-n)\frac{f''}{f}+(m-n)(m-n-3)\frac{f'^2}{4f^2}.
\end{equation}
\end{pro}
\begin{proof}
We use the following equation
\begin{align}
\bar{S}=\bar{\Ric}_{tt}\bar{g}^{tt}+\bar{\Ric}_{ij}\bar{g}^{ji}-\bar{\Ric}_{\alpha\beta}\bar{g}^{\beta\alpha}.
\end{align}
  Then,  with a straightforward computation
  we will have the desired result.
\end{proof}
\begin{pro}
Let $\bar{G}$ be the Einstein gravitational tensor field of
$(\bar{M},\bar{g})$, then we have the following equations
\begin{align}\label{wEnest1}
\bar{G}_{tt}&=\frac{S_M}{2f}+ (m-n)(m-n-1)\frac{f'^2}{8f^2},\nonumber\\
\bar{G}_{ij}&=G_{ij}+(1-m+n)\frac{f''}{2}g_{ij}+(m-n-2)\frac{f'^2}{4f}g_{ij}-
(m-n)(m-n-3)\frac{f'^2}{8f}g_{ij},\nonumber\\
\bar{G}_{i\alpha}&=G_{i\alpha}+(1-m+n)\frac{f''}{2}g_{i\alpha}+(m-n-2)\frac{f'^2}{4f}g_{i\alpha}-
(m-n)(m-n-3)\frac{f'^2}{8f}g_{i\alpha},\nonumber\\
\bar{G}_{\alpha \beta}&=G_{\alpha
\beta}+(1-m+n)\frac{f''}{2}g_{\alpha
\beta}+(m-n-2)\frac{f'^2}{4f}g_{\alpha \beta}-
(m-n)(m-n-3)\frac{f'^2}{8f}g_{\alpha
\beta},\nonumber\\
\bar{G}_{ti}&=\bar{G}_{t \alpha}=0.
\end{align}
\end{pro}
\begin{proof}
The Einstein gravitational tensor field of $(\bar{M}, \bar{g})$ is
\begin{align}
\bar{G} = \bar{\Ric} - \frac{1}{2}\bar{S}\bar{g},\label{Q_6}.
\end{align}
 By using \eqref{Q_6}, \eqref{wRicc1}, \eqref{wScalar1} and \eqref{wmetr},
 we obtain \eqref{wEnest1}.
\end{proof}
\begin{pro}
Let $(\bar{G}, \bar{\Lambda})$ be the Einstein gravitational tensor
field and cosmological constant of $(\bar{M},\bar{g})$ for the case
I, then we have the following equations
\begin{align}\label{wConst1}
&\bar{\Lambda}(t)=\frac{1}{4}(m-n-1)\left((m+n)\frac{f''}{f}-(m+3n)\frac{f'^2}{2f^2}\right),\\
&G_{\alpha\beta}=\frac{1}{4}(1-m+n)(m+n-2)\left(\frac{f''}{f}-\frac{f'^2}{f^2}\right)fg_{\alpha\beta
}.\label{G-alpha-beta1}
\end{align}
\end{pro}
\begin{proof}

The Einstein equations with cosmological  constant $\bar{\Lambda}$
is
\begin{align}
\bar{G}=-\bar{\Lambda}\bar{g}.\label{Einst-equation}
\end{align}
 By using  \eqref{Einst-equation},\eqref{wmetr}  and \eqref{wEnest1} , we
 obtain
\eqref{wConst1} and \eqref{G-alpha-beta1}.
\end{proof}
The Einstein equations  on $(\bar{M},\bar{g})$ with cosmological
constant $\bar{\Lambda}$ induces the Einstein equations
 on $(M,
g)$, where the cosmological constant $\Lambda$ is given by
\begin{equation}
\Lambda(t)=\frac{1}{4}(m-n-1)(m+n-2)\left(\frac{f''}{f}-\frac{f'^2}{f^2}\right)f(t).
\end{equation}
We call $\bar{\Lambda}$ and ${\Lambda}$ here as
$\bar{\Lambda}_{_{(I)}}$ and ${\Lambda}_{_{(I)}}$, respectively. The
requirement $\bar{\Lambda}_{_{(I)}}={\it Const}$ can determine the
warp function $f(t)$ and $\Lambda_{_{(I)}}(t)$, using which the
Einstein equations
$\bar{G}_{AB}=-\bar{\Lambda}_{_{(I)}}\bar{g}_{AB}$ and
$G_{\alpha\beta}=-{\Lambda}_{_{(I)}}g_{\alpha\beta} $ are solved.

Let us consider $\bar{\Lambda}_{_{(I)}}=L^{-2}={\it Const}$, where
$L$ has dimension of {\it length}. Then, we obtain
\begin{equation}\label{f(t)}
f(t)=\frac{L^2}{4k}\exp{\frac{t}{\sqrt{k}L}},~~~~k=\frac{1}{8}(m+n)(m-n-1).
\end{equation}
Assuming a supermanifold $M^{3|0}$, the warp function in this warp
product space is identified with the squared scale factor $a^2(t)$
in (1+3)-dimensional Robertson-Walker cosmology which describes the
time $t$ evolution of the 3-dimensional spatial hypersurface $M^{3|0}$ as
\begin{equation}
a(t)=\frac{L}{\sqrt{3}}\exp{\frac{t}{\sqrt{3}L}},
\end{equation}
which shows a de Sitter expansion of the universe in agreement with
the solution of FRW cosmological model with a cosmological constant
$L^{-2}$. 

Using the warp function \eqref{f(t)}, we obtain
$\Lambda_{_{(I)}}=0$. This result is in agreement with the fact that
the cosmological constant is an energy density corresponding to the
$tt$ component of Einstein equations and has no counterpart on the
supermanifold $M^{3|0}$, namely we have a Ricci flat
Einstein equation
\begin{equation}
G_{\alpha\beta} = 0.
\end{equation}

\subsection*{\textbf{ Case II }}
Let $\bar{M}=\textbf{I}^{0|1}\times_{f} M^{m|n}$  be a super
semi-Riemannian warped product.  Then
$\mathcal{O}_{\textbf{I}^{0|1}}= \mathbb{R}\oplus \mathbb{R}=a+b
\bar{t}$ ~ such that $a,b \in \mathbb{R}$
 and $(M,
 g)$ is  a super semi-Riemannian manifold an, $( x^i,x^\alpha)$ be an
even - odd  coordinate system on $M^{m|n}$, $\bar{t}$ be  a
coordinate on $\textbf{I}^{0|1}$ and $f\equiv f(\bar t)$. By
$|\bar{t}|,|i|,|\alpha|$ we mean the parities of
$\bar{t},x^i,x^\alpha$ respectively. So $|\bar{t}|= |\alpha|=1$ and
$|i|=0.$ we have
\begin{align}\label{wmetr2}
&\bar{g}_{\bar{t}\bar{t}}=\bar{g}(\partial_{\bar{t}},\partial_{\bar{t}})=0
,\nonumber\\
&\bar{g}_{\alpha \bar{t}}=\bar{g}_{\bar{t} \alpha }=\bar{g}_{\bar{t} i }=0 , \nonumber\\
&\bar{g}_{\alpha\beta}=fg_{\alpha\beta}, \nonumber\\
&\bar{g}_{ij}=fg_{ij},\nonumber\\
& \bar{g}_{i\alpha}=fg_{i\alpha}.
\end{align}
The Christoffel symbols $\Gamma_{IJ}^L$ satisfies  the following
equation
\begin{align}
\bar{\Gamma}_{IJ}^L \bar{g}_{_{LK}}
=&\frac{1}{2}\Big[\partial_{_I}\bar{g}_{_{JK}}-(-1)^{^{|K|(|I|+|J|)}}\partial_{_K}\bar{g}_{_{IJ}}+(-1)^{^{|I|(|J|+|K|)}}
\partial_{_J}\bar{g}_{_{KI}}\Big],
\end{align}
where $I,J, K,L$ stand for $\bar{t},i$ and $\alpha $. If at least
one of indices  $K$ , $I$ or $J$  is equal to $ \bar{t}$  then  the
both sides of equality are zero. So $\Gamma_{IJ}^{\bar{t}},
\Gamma_{\bar{t}\bar{t}}^{I}, \Gamma_{\bar{t}\bar{t}}^{\bar{t}},
\Gamma_{ \bar{t}I}^{\bar{t}}, \Gamma_{I\bar{t}}^{\bar{t}}  $ are
arbitrary elements. Therefor from now on for computation we consider
them to be zeros.\\
Similar to case I, we can obtain the following equations for Ricci
tensor
\begin{align}\label{wRicc2}
&\bar{\Ric}_{\bar{t}\bar{t}}=(n-m)\frac{f'^2}{4f^2} + \frac{f'^2}{2f^2},
\nonumber\\
&\bar{\Ric}_{\bar{t}i}=\bar{\Ric}_{
\bar{t}\alpha}=0,\nonumber\\
&\bar{\Ric}_{\alpha\beta}={\Ric}_{
\alpha\beta},\nonumber\\
&\bar{\Ric}_{ij}={\Ric}_{ ij},\nonumber\\
&\bar{\Ric}_{\alpha i}={\Ric}_{\alpha i},
\end{align}
where $'$ denotes $\partial_{\bar t}$, and the scalar curvature
\begin{align}\label{wScalar2}
 \bar{S}= \frac{S_{_M}}{f}.
\end{align}
\begin{pro}
Let $(\bar{M},\bar{g})$ be super semi-Riemannian warped product as
above. If $(\bar{M},\bar{g})$  is an Einstein space, then
$\bar{\Ric}(\partial_{\bar{t}} ,\partial _{\bar{t}})=0$ and $m-n=
-2$.
\end{pro}
\begin{proof}
By Einstein equation with cosmological  constant
\begin{align}
\bar{G}=-\bar{\Lambda}\bar{g}.\label{6.1}
\end{align}
one has $\bar{G}_{\bar{t}\bar{t}}=
-\bar{\Lambda}\bar{g}_{\bar{t}\bar{t}}$. Since
$\bar{g}_{\bar{t}\bar{t}}=0$, thus $\bar{G}_{\bar{t}\bar{t}}=0$ and
$\bar{\Ric}_{\bar{t}\bar{t}}=0$. So  the first equality in
\eqref{wRicc2} shows $m-n=-2$.
\end{proof}
One can easily show that the Einstein gravitational tensor fields of
$(\bar{M},\bar{g})$ and $(M,g)$ are equal i.e. $\bar{G}=G$.
\begin{pro}
Let $(\bar{G}, \bar{\Lambda})$ be the Einstein gravitational tensor
field and cosmological constant of $(\bar{M},\bar{g})$, then we have
the following equations
\begin{align}
&\bar{\Lambda}=a \in \mathbb R,\\
&\bar{G}_{\alpha\beta}=G_{\alpha\beta}.
\end{align}
\end{pro}
The Einstein equations  on $(\bar{M},\bar{g})$ with cosmological
constant $\bar{\Lambda}$ induces the Einstein equations
 on $(M,
g)$, where the cosmological term $\Lambda$ is given by
\begin{equation}\label{27}
\Lambda(\bar t)= af(\bar t).
\end{equation}
We call $\bar{\Lambda}$ and ${\Lambda}$ here as
$\bar{\Lambda}_{_{(II)}}$ and ${\Lambda}_{_{(II)}}$, respectively.
It is seen that, unlike the case I, $\bar{\Lambda}_{_{(II)}}$ itself
is constant independent of $f(\bar t)$ and $(m, n)$. Therefore,
$f(\bar t)$ is no longer determined by the equation
$\bar{\Lambda}=a$ and this case describes a super warped product
spacetime with an arbitrary warp function $f(\bar t)$ which is a
result of identities
$\bar{g}_{\bar{t}\bar{t}}=\bar{\Ric}(\partial_{\bar{t}} ,\partial
_{\bar{t}})=\bar{G}_{\bar{t}\bar{t}}=0$. In other words, since there
is no information in the $\bar{t}\bar{t}$ component of Einstein
equations, the warp function is not determined. Using \eqref{27},
the cosmological term $\Lambda_{_{(II)}}$ is not determined too, and
so the Einstein equation
$G_{\alpha\beta}=-{\Lambda}_{_{(II)}}g_{\alpha\beta} $ is free of
physical information. Moreover, assuming a supermanifold $M^{3|0}$
and identifying the warp function with the squared scale factor
$a^2(t)$ in a (1+3)-dimensional Robertson-Walker cosmology, it turns
out that the $\bar t$ evolution of the supermanifold $M^{3|0}$ is
completely arbitrary and so this case does not describe a physically
viable cosmology.

\subsection*{\textbf{ Case III }}
Let $(\bar{M} ,\bar{g})$  be a super semi-Riemannian warped product
where $\bar{M}=\textbf{I}^{1|1}\times_{f} M^{m|n}$  and $\bar{g}=
-dt^2 + dtd\bar{t} + fg_M$. In addition
$\textbf{I}^{1|1}=I^{1|0}\times I^{0|1}$ and define $f= h+\bar{t}k$
where $ h,k$ are smooth functions on $\textbf{I}$ and $h
> 0$.\\
Let $(t,\bar{t})$ be an even - odd coordinate system on
$\textbf{I}^{1|1}$,
 $( x^i,x^\alpha)$ be a coordinate system on $M^{m|n}$ and $f\equiv f(t, \bar t)$.
Then, all equalities \eqref{wmetr} and \eqref{wmetr2} are satisfied
simultaneously. Similar to case I and case II we can obtain the
following equations for Ricci tensor and scalar curvature
\begin{align}\label{wRicc4}
&\bar{\Ric}_{\bar{t}\bar{t}}=
 \frac{(m^2-n^2)}{2} \Big(\frac{m+n}{2}\frac{f_{\bar{t}}^2}{f^2}-\frac{f_{\bar{t}\bar{t}}}{f}\Big), \nonumber\\
&\bar{\Ric}_{tt}=\frac{(m^2-n^2)}{2}\left(-
 \frac{f_{tt}}{f} - \frac{(m+n-2)}{2}(\frac{f_t}{f})^2
 \right),
 \nonumber\\
&\bar{\Ric}_{t\bar{t}}=\frac{(m^2-n^2)}{2}\left(-\frac{f_{t\bar{t}}}{f}-\frac{(m+n-2)}
{2}\frac{f_{\bar{t}}f_t}{f^2}\right),
\nonumber\\
&\bar{\Ric}_{\bar{t}t}=
\frac{(m^2-n^2)}{2}\left(-\frac{f_{\bar{t}t}}{f}-\frac{(m+n-2)}{2}
\frac{f_{\bar{t}}f_t}{f^2}\right), \nonumber\\
 &\bar{\Ric}_{
\bar{t}\alpha}=\bar{\Ric}_ {t\alpha}= \bar{\Ric}_ {ti}= \bar{\Ric}_{
\bar{t}i}= 0,\nonumber\\
&\bar{\Ric}_{\alpha\beta}={\Ric}_{\alpha
\beta}-\frac{(m-n-1)(m+n)}{2}(\frac{f_tf_{\bar{t}}}{f})(g_{\alpha\beta})-
\frac{(m^2-n^2)}{4}(\frac{f^2_{\bar{t}}g_{\alpha\beta}}{f}),
\nonumber\\
&\bar{\Ric}_{i\alpha}={\Ric}_{i
\alpha}+ \frac{(m^2-n^2)}{4}(\frac{f^2_{\bar{t}}g_{i\alpha}}{f}), \nonumber\\
&\bar{\Ric}_{\alpha i}={\Ric}_{\alpha
i}+\frac{(m^2-n^2)}{4}(\frac{f^2_{\bar{t}}g_{\alpha i}}{f}), \nonumber \\
&\bar{\Ric}_{ij}={\Ric}_ {i
j}-\frac{(m-n-1)(m+n)}{2}(\frac{f_tf_{\bar{t}}}{f})(g_{ij}) -
\frac{(m+n)^2}{4}(\frac{f^2_{\bar{t}}g_{ij}}{f}),
\end{align}
\begin{equation}
 S=\dfrac{1}{f}S_M +\frac{(m+n)\Big(n(m-n-1)-m(m-1)\Big)}{2}\frac{f_tf_{\bar{t}}}{f^2}-
\frac{(m+n)}{2}
 \left(\frac{4n-m^2+n^2-2m}{2}\right)
 \frac{f^2_{\bar{t}}}{f^2} + (\frac{m^2-n^2-2}{2})\frac{f_{\bar{t}\bar{t}}}{f}.
\end{equation}
Let $(\bar{G}, \bar{\Lambda})$ be the Einstein gravitational tensor
field and cosmological constant of $(\bar{M},\bar{g})$ respectively,
then we have the following equations
\begin{align}
\bar{G}_{\bar{t}\bar{t}}=&(\frac{m^2 -
n^2}{2})\left(\frac{m+n}{2}\frac{f^2_{\bar{t}}}{f^2}-\frac{f_{\bar{t}\bar{t}}}{f}\right),\\
\begin{split}
\bar{G}_{tt}=&(\frac{m^2 -
n^2}{2})\left(-\frac{f_{tt}}{f}-(\frac{m+n-2}{2})\frac{f^2_t}{f^2}\right)+
\frac{S_M}{2f}+\frac{(m+n)(n(m-n-1)-m(m-1))}{4}\frac{f_tf_{\bar{t}}}{f^2}\\
\nonumber
&+\frac{(m+n)(4n-m^2+n^2-2m)}{4}\frac{f^2_{\bar{t}}}{2f^2}+(\frac{m^2-n^2-2}{4})\frac{f_{\bar{t}\bar{t}}}{f},
\end{split}\\
\begin{split}
\bar{G}_{t\bar{t}}=&-(\frac{m^2 -
n^2}{2})\frac{f_{t\bar{t}}}{f}-\frac{S_M}{2f}-\frac{(m+n)(4n-m^2+n^2-2m)}{4}\frac{f^2_{\bar{t}}}
{2f^2}+\frac{(m+n)(2n^2-mn+m-m)}{4}\frac{f_tf_{\bar{t}}}{f^2}\\
\nonumber  &-(\frac{m^2-n^2-2}{4})\frac{f_{\bar{t}\bar{t}}}{f},
\end{split}\\
\begin{split}
\bar{G}_{\alpha\beta}=&G_{\alpha\beta}-\frac{(m+n)}{2}\left(\frac{(m-n-1)(2+n)-m(m-1)}{2}
\right)\frac{f_tf_{\bar{t}}}{f}g_{\alpha\beta}-\frac{(m+n)(n^2-m^2+2n)}{4}\frac{f^2_{\bar{t}}}{2f}g_{\alpha\beta}
\\\nonumber  &-(\frac{m^2-n^2-2}{4})f_{\bar{t}\bar{t}}g_{\alpha\beta},
\end{split}\\
\begin{split}
\bar{G}_{\alpha i}=&G_{\alpha
i}-\frac{(m+n)}{2}\left(\frac{(n(m-n-1)-m(m-1))}{2}\right)\frac{f_tf_{\bar{t}}}{f}g_{\alpha
i}+\frac{(m+n)(4m+m^2-n^2-6n)}{4}\frac{f^2_{\bar{t}}}{2f}g_{\alpha
i}\\\nonumber  &-(\frac{m^2-n^2-2}{4})f_{\bar{t}\bar{t}}g_{\alpha
i},
\end{split}\\
\begin{split}
\bar{G}{ij}=&G_{ij}-\frac{(m+n)}{2}\left(\frac{(m-n-1)(2+n)-m(m-1)}{2}
\right)\frac{f_tf_{\bar{t}}}{f}g_{ij}-\frac{(m+n)(n^2-m^2+6n)}{4}\frac{f^2_{\bar{t}}}{2f}g_{ij}\\
\nonumber  &-(\frac{m^2-n^2-2}{4})f_{\bar{t}\bar{t}}g_{ij},
\end{split}
\end{align}
\begin{proof}
we use relation $\bar{G}=\bar{Ric}-1/2\bar{S}\bar{g}$ and similar to
Case I , II we obtain gravitational tensor field $\bar{G}$.
\end{proof}
\begin{pro}
Let $(\bar{G}, \bar{\Lambda})$ be the Einstein gravitational tensor
field and cosmological constant of $(\bar{M},\bar{g})$, then we have
the following equations
\begin{align}
\begin{split}
\bar{\Lambda}(t, \bar t)=&(1-\frac{n}{2})(\frac{m^2 -
n^2}{2})\left(-\frac{f_{tt}}{f}-(\frac{m+n-2}{2})\frac{f^2_t}{f^2}\right)+
\frac{(m+n)(2n(m-n-1)-m(m-1))}{4}\frac{f_tf_{\bar{t}}}{f^2}\\
\nonumber  &+\frac{m^2(n-2)+n^2(m-8)-(m^3+n^3)+10mn}{8}
\frac{f^2_{\bar{t}}}{f^2}+\frac{(m^2-n^2-2)}{4}\frac{f_{\bar{t}\bar{t}}}{f},\\
\end{split}\\
\begin{split}
G_{\alpha\beta}=&f g_{\alpha\beta}(1-\frac{n}{2})\\
\nonumber
&\times\left(\frac{(m+n)(m-n-1)}{2}\frac{f_tf_{\bar{t}}}{f^2}+
\frac{(m+n)(-6n+m^2-n^2+4m)}{8} \frac{f^2_{\bar{t}}}{f^2}-\frac{(m^2
-n^2)}{2}(-\frac{f_{tt}}{f}-(\frac{m+n-2}{2})\frac{f^2_t}{f^2})\right),
\end{split}
\end{align}
\end{pro}
\begin{proof}

The Einstein equations with cosmological  constant $\bar{\Lambda}$
is
\begin{align}
\bar{G}=-\bar{\Lambda}\bar{g}.\label{}
\end{align}
 Similar to case I , II  we
 obtain
$\bar{\Lambda}$ and $G_{\alpha\beta}$.
\end{proof}
The Einstein equations  on $(\bar{M},\bar{g})$ with cosmological
constant $\bar{\Lambda}$ induces the Einstein equations
 on $(M,
g)$, where the cosmological constant $\Lambda$ is given by
\begin{align}
\begin{split}
\Lambda(t, \bar t)=& - f(1-\frac{n}{2})\\
\nonumber
&\times\left(\frac{(m+n)(m-n-1)}{2}\frac{f_tf_{\bar{t}}}{f^2}+
\frac{(m+n)(-6n+m^2-n^2+4m)}{8} \frac{f^2_{\bar{t}}}{f^2}-\frac{(m^2
-n^2)}{2}(-\frac{f_{tt}}{f}-(\frac{m+n-2}{2})\frac{f^2_t}{f^2})\right).\\
\end{split}
\end{align}
We call $\bar{\Lambda}$ and ${\Lambda}$ here as
$\bar{\Lambda}_{_{(III)}}$ and ${\Lambda}_{_{(III)}}$, respectively.
The requirement $\bar{\Lambda}_{_{(III)}}={\it Const}$ can determine
the warp function $f(t, \bar t)$ and $\Lambda_{_{(III)}}(t, \bar
t)$, using which the Einstein equations
$\bar{G}_{AB}=-\bar{\Lambda}_{_{(III)}}\bar{g}_{AB}$ and
$G_{\alpha\beta}=-{\Lambda}_{_{(III)}}g_{\alpha\beta} $ are solved.
Solving $\bar{\Lambda}_{_{(III)}}={\it Const}$ as a nonlinear
quadratic partial differential equation for $f(t, \bar{t})$ is a
hard task. However, for instance, one can propose an ansatz
\begin{equation}
f(t, \bar{t})=e^{\alpha t+\beta \bar{t}},
\end{equation}
for which the equation $\bar{\Lambda}_{_{(III)}}={\it Const}$ reads
as the following algebraic equation
\begin{equation}\label{Const}
\begin{split}
&\alpha^2(1-\frac{n}{2})(\frac{n^2 -
m^2}{2})(\frac{m+n}{2})+\alpha\beta
\frac{(m+n)(2n(m-n-1)-m(m-1))}{4}\\
&+\beta^2\frac{m^2(n-2)+n^2(m-8)-(m^3+n^3)+2(m^2-n^2-2)+10mn}{8}=\mbox{Const}.
\end{split}
\end{equation}
For a given constant and given values of $m$ and $n$, the
coefficients $\alpha$ and $\beta$ satisfy the equation
\eqref{Const}. Moreover, $\Lambda_{_{(III)}}(t, \bar t)$ is also
obtained as follows
\begin{align}
\begin{split}
\Lambda_{_{(III)}}(t, \bar t)=& - e^{\alpha t+\beta \bar{t}}(1-\frac{n}{2})\\
&\times\left(\alpha^2\frac{(m^2
-n^2)}{2}(\frac{m+n}{2})+\alpha\beta\frac{(m+n)(m-n-1)}{2}+\beta^2
\frac{(m+n)(-6n+m^2-n^2+4m)}{8}\right).\\
\end{split}
\end{align}
Assuming a supermanifold $M^{3|0}$, the equation
$\bar{\Lambda}_{_{(III)}}={\it Const}$ and $\Lambda_{_{(III)}}(t,
\bar t)$ read as
\begin{equation}\label{35}
-\left(\frac{27}{4}\alpha^2+\frac{9}{2}\alpha\beta+\frac{31}{8}\beta^2\right)=\mbox{Const},
\end{equation}
\begin{equation}
\Lambda_{_{(III)}}(t, \bar t)=- e^{\alpha t+\beta
\bar{t}}\left(\frac{27}{4}\alpha^2+3\alpha\beta+\frac{63}{8}\beta^2\right).
\end{equation}
By choosing one of the coefficients $\alpha$ or $\beta$, the other
coefficient is fixed through the equation \eqref{35} and the $\bar{t}$ evolution of supermanifold $M^{3|0}$, namely $e^{\beta \bar{t}}$, determines its ${t}$ evolution $e^{\alpha{t}}$, and vice versa. A wide variety
of positive, negative, vanishing and imaginary values for $\alpha$ and $\beta$ is available. This is an interesting result in that we obtain a two
parameter $(\alpha, \beta)$ class of two-times cosmological models
for which the $t$ and $\bar{t}$ evolutions of the cosmological scale
factor $a(t, \bar{t})$ correspond to the same cosmological constant
$\bar{\Lambda}_{_{(III)}}$. Positive and/or negative values of $\alpha$ and $\beta$ may result in inflationary expanding or contracting (depending on the signs and values of $\alpha$ and $\beta$) supermanifold $M^{3|0}$. Vanishing values of both $\alpha$ and $\beta$ are possible if $\bar{\Lambda}_{_{(III)}}=0$
and result in a static state of supermanifold $M^{3|0}$, which corresponds
to Einstein static universe \cite{ESU}. Imaginary values
of $\alpha$ and/or $\beta$ may result in a full or partial wave-like evolution
of the supermanifold $M^{3|0}$ \cite{Wesson}.

\section*{Conclusion}
In this paper, we have studied three types of super warped product
space-times $\bar{M}_{_{(I)}}=\textbf{I}^{1|0}\times_f M^{m|n}$,
$\bar{M}_{_{(II)}}=\textbf{I}^{0|1}\times_{f} M^{m|n}$, and
$\bar{M}_{_{(III)}}=\textbf{I}^{1|1}\times_{f} M^{m|n}$, where
$M^{m|n}$ is a supermanifold of dimension $m|n$,
$\textbf{I}^{\delta|\delta'}$ is standard superdomain with
$\textbf{I}=(0,1)$ and $\delta,\delta' \in \{0,1\}$, subject to the
warp functions $f(t)$, $f(\bar t)$, and $f(t, \bar t)$,
respectively. By using the Generalized Robertson-Walker space-time,
as super warped product space, we have shown that the Einstein
equations $\bar{G}_{AB}=-\bar{\Lambda}\bar{g}_{AB}$, with
cosmological term $\bar{\Lambda}$ are reducible to the Einstein
equations $G_{\alpha\beta} = -\Lambda g_{\alpha\beta}$ on the
supermanifold $M$ with cosmological term ${\Lambda}$, where
$\bar{\Lambda}$ and ${\Lambda}$ are functions of $f(t)$, $f(\bar
t)$, and $f(t, \bar t)$, as well as ($m$, $n$). Then, by demanding
for constancy of $\bar{\Lambda}$ we have determined the warp
functions and $\Lambda$ which result in finding the solutions for
Einstein equations $\bar{G}_{AB}=-\bar{\Lambda}\bar{g}_{AB}$ and
$G_{\alpha\beta} = -\Lambda g_{\alpha\beta}$.

Interested in the cosmological solutions of Einstein equations with
cosmological constant in the special case of supermanifold $M^{3|0}$
(3-dimensional space) we have found, in each case of super warp product space, the
following results:
\begin{itemize}
\item In the first case, namely $\bar{M}_{_{(I)}}=\textbf{I}^{1|0}\times_f M^{3|0}$, the $t$ evolution of the supermanifold $M^{3|0}$ is described by
$t$ evolution of the cosmological scale factor as
$a(t)=\frac{L}{\sqrt{3}}\exp{\frac{t}{\sqrt{3}L}}$ which describes a
de Sitter expansion of the Universe with initial scale factor
$a(0)=\frac{L}{\sqrt{3}}$.
\item In the second case, namely $\bar{M}_{_{(II)}}=\textbf{I}^{0|1}\times_f M^{3|0}$, the $\bar t$ evolution of the supermanifold $M$ is completely arbitrary and so this case does not describe a physically viable cosmology.
\item In the third case, namely $\bar{M}_{_{(I)}}=\textbf{I}^{1|1}\times_f M^{3|0}$, the two-times evolution of supermanifold $M^{3|0}$ is described by $(t, \bar{t})$ evolution of the cosmological scale factor $a(t, \bar{t})=e^{\frac{\alpha t+\beta \bar{t}}{2}}$ which defines a two parameter $(\alpha, \beta)$ class of two-times cosmological models
corresponding to the same cosmological constant
$\bar{\Lambda}_{_{(III)}}$.
\end{itemize}


\end{document}